\newtheorem {theorem}{Theorem}[section]
\newtheorem{definition}{Definition}[section]
\newtheorem{lemma}{Lemma}[section]
\newtheorem{remark}{Remark}[section]
\newenvironment{proof}[1][Proof]{\textbf{#1.} }{\
\rule{0.5em}{0.5em}}
\newcommand{\bi}[1]{\mbox{\boldmath{$ #1 $}}}
\def\E{{{\mathbb E}\,}}
\begin{document}

\title{Empirical Likelihood Test for Diagonal Symmetry}
\author{Yongli Sang\textsuperscript{a}\thanks{CONTACT Yongli Sang. Email: yongli.sang@louisiana.edu} and Xin Dang\textsuperscript{b}}
\date{%
\textsuperscript{a}Department of Mathematics, University of Louisiana at Lafayette, Lafayette, LA 70504, USA\\
    \textsuperscript{b}Department of Mathematics, University of Mississippi, University, MS 38677, USA\\[2ex]%
         \today
}

\maketitle

\begin{abstract}
Energy distance is a statistical distance between the distributions of random variables, which characterizes the equality of the distributions. 
Utilizing the energy distance, we develop a nonparametric test for the diagonal symmetry, which is consistent against any fixed alternatives.  The test statistic developed in this paper is based on the difference of two $U$-statistics. By applying the jackknife empirical likelihood approach, the standard limiting chi-square distribution with degree freedom of one is established and is used to determine critical value and $p$-value of the test. Simulation studies show that our method is competitive in terms of empirical sizes and empirical powers.   
\noindent  

\vskip.2cm 

\noindent {\bf Keywords:}
\noindent  Diagonal symmetry, Energy distance, Jackknife empirical likelihood,  $U$-statistic, Wilks' theorem

\vskip.2cm 
\noindent  {\textit{MSC 2010 subject classification}: 62G35, 62G20}

\end{abstract}

%%%%%%%%%%%%%%%%%%%%%%%%%%%%%%%%%%%%%%%%%%%%%%%%%%%%%%%%%%%%%%%%%%%%%%%%%%%%%%%%%%%%%%%%
%%%%%%%%%%%%%%%%%%%%%%%%%%%%%%%%%%%%%%%%%%%%%%%%%%%%%%%%%%%%%%%%%%%%%%%%%%%%%%%%%%%%%%%%
%%%%%%%%%%%%%%%%%%%%%%%%%%%%%%%%%\section{Introduction}%%%%%%%%%%%%%%%%%%%%%%%%%%%%%%%%%%%%%%%%%%%%%
\section{Introduction}
\noindent
Testing the departures from symmetry is crucial in statistics, especially in nonparametric statistical science. This problem has been discussed by many works for the univariate case (see e.g. \cite{Antille1982, Bhattacharya1982, Boos1982, Cabilio1996, Doksum1977, Hollander1988, Kou1985, Koziol1985, Miao2006, Mira1999, Randles1980, Schuster1987}).
For the multivariate case, notions of symmetry include spherical, elliptical, central, and angular symmetry.  One can refer to \cite{Serfling2006} for perspectives on those symmetries. The central symmetry is considered in this paper (see e.g. \cite{EG16, Ghosh1992, Heathcote1995, Henze2003, Neuhaus1998, Szekely2001}). 
In this setting, we explore testing the central symmetry of a distribution around the specified center. Without loss of generality, the center at $\bi 0$ is assumed and the central symmetry at $\bi 0$ is called the diagonal symmetry. Suppose $\bi X$ is a $d$-variate random variable from an unknown distribution $F$ in $\mathbb R^d$, then the hypothesis of interest  is 
\begin{align}\label{test:symmetry}
H_0: \bi X \stackrel{D}{=} -\bi X,
\end{align}
where  $\stackrel{D}{=}$ denotes equality in distribution. 

Often an omnibus test is based on a discrepancy measure among distributions. The energy distance (\cite{Szekely13, Szekely17}) is one of such measures, which is defined as follows. 
\begin{definition}[Energy distance]
Suppose that $(\bi X, \bi X') $ and $(\bi Y, \bi Y') $ are independent pairs independently from d-variate distributions $F$ and $G$, respectively. Then the energy distance between $\bi X$ and $\bi Y$ is  
\begin{align} \label{energy-distance}
\mathcal{E}(\bi X,\bi Y)=2\E\|\bi X-\bi Y\|-\E\|\bi X-\bi X'\|-\E\|\bi Y-\bi Y'\|,
\end{align}
where $\|\cdot\|$ denotes the Euclidean distance.
\end{definition}
 Sz{\'e}kely and M{\'o}ri (\cite{Szekely2001}) have shown  that the central symmetry at $\bf 0$  holds if and only if $\mathbb{E}\|\bi X+\bi X'\| = \mathbb{E}\|\bi X-\bi X'\|$. 
Utilizing this fact, we propose an empirical likelihood test for the diagonal symmetry.  

%$\mathcal{E}(\bi X,\bi Y)\geq 0$ and the equality holds if and only if $F=G$.  
%A natural estimator of (\ref{energy-distance}), the linear combination of three $U$-statistics,  is called energy statistic.   Reject $F=G$ if the energy statistic is sufficiently large.  To extend to the $K$-sample problem, Rizzo and Sz\'{e}kely (\cite{Rizzo2010}) proposed a new method called distance components (DISCO) by partitioning the total distance dispersion of the pooled samples into the within distance and  between distance components analogous to the variance components in ANOVA. The test statistic is the ratio of the between variation and the within variation, where the between variation is the weighted sum of all two-sample energy distances. Equivalently, Dang {\em et al} \cite{Dang2018} conduced a test based on the ratio of the between variation and the total variation, in which the ratio defines a dependence measure.  Although those tests are consistent against any departure of the null hypothesis and are easy to compute the test statistics, the tests have to reply on a permutation procedure to determine the critical values since the null distribution depends on the unknown underlying distributions. 

Empirical likelihood (EL) tests (\cite{EM03, Cao06, Zhang2007}) have been proposed to avoid the time-consuming permutation procedure. The EL approach  (\cite{Owen1988}, \cite{Owen1990}) enjoys effectiveness of likelihood method and flexibility of nonparametric method,  and hence has been widely used.  On the computational side, the EL method involves a maximization of the nonparametric likelihood supported on data subject to some constraints. If these constraints are linear, the computation of the EL method is relatively easy. However,  EL loses this computational efficiency when some nonlinear constraints are involved. Jackknife empirical likelihood (JEL) (\cite{Jing2009}) is proposed to reduce the computation burden of the  EL approach when nonlinear constraints are involved.  The JEL has been proved very efficient in dealing with $U$-statistics and has attracted statistician's interest due to the efficiency. In the literature, there are a quantity of papers utilizing the JEL to conduct tests, see \cite{Sang2019a}, \cite{Sang2019b},\cite{Liu2018}, \cite{Wan2018} and so on.

As mentioned above, suppose $\bi X$ and $\bi X'$ are i.i.d. from distribution $F$  in $\mathbb{R}^d$, then 
%(\cite{Szekely2001}, \cite{Rizzo2010})
%\begin{lemma} 
%\begin{align}\label{ineqn} 
%\mathbb{E}\|X+X'\| \geq \mathbb{E}\|X-X'\|,
%\end{align}
%and the equality holds if and only if $X$ is diagonally symmetric. 
%\end{lemma}
%Here $\|\cdot\|$ denotes the Euclidean distance. 
testing the diagonal symmetry of (\ref{test:symmetry}) will be equivalent to testing
\begin{align}\label{test1}
H'_0: \mathbb{E}\big [\|\bi X+\bi X'\|-\|\bi X-\bi X'\| \big ]=0.
% \ \text{vs.} \ H_a: \mathbb{E}\|X+X'\|\neq \mathbb{E}\|X-X'\|.
\end{align}
The left-side of the  equation is half of the energy distance between $\bi X$ and $-\bi X^\prime$ and can be naturally estimated by the energy statistic that is the difference of two $U$-statistics.   However, the energy statistic is degenerate under $H_0$. This does not allow us to apply the regular JEL of \cite{Jing2009} directly. We develop an improved JEL for such a degenerate $U$-statistic case in this paper,  which enables us to test the diagonal symmetry efficiently. %To extend 

The reminder of the paper is organized as follows. In Section 2, we develop the JEL method for the diagonal symmetry  test. Simulation studies are conducted in Section 3. Section 4  concludes the paper with a brief summary. All proofs are reserved to the Appendix. 

%%%%%%%%%%%%%%%%%%%%%%%%%%%%%%%%%%%%%%%%%%%%%%%%%%%%%%%%%%%%%%%%%%%%%%%%%%%%%%%%%%%%%%%%
%%%%%%%%%%%%%%%%%%%%%%%%%%%%%%%%%%%%%%%%%%%%%%%%%%%%%%%%%%%%%%%%%%%%%%%%%%%%%%%%%%%%%%%%
%%%%%%%%%%%%%%%%%%%%%%%%%%%%%%%%%\section{Methodology}%%%%%%%%%%%%%%%%%%%%%%%%%%%%%%%%%%%%%%%%%%%%%
\section{Methodology}

Let $\{\bi X_i\}^n_{i=1}$ be a random sample from $F$, and define the energy statistic, 
%A natural estimator for the left side of (\ref{test1}) is
\begin{align*}
U={n \choose 2}^{-1} \sum_{1\leq i <j \leq n} \Big [ \|\bi X_i+\bi X_j\|-\|\bi X_i-\bi X_j\|\Big ],
\end{align*}
which is a one-sample $U$-statistic of degree 2. However, it is degenerate under $H_0$ which does not allow us to apply the JEL directly.
We develop an improved JEL approach for degenerate $U$-statistics in this section.

We partition the data set $\{\bi X_i\}^n_{i=1}$ into two parts: $\mathcal{ X}=\{\bi X_1, ..., \bi X_{n_1}\}$
and $\mathcal{Y}=\{\bi Y_1, ..., \bi Y_{n_2}\}$, where $\bi Y_i=\bi X_{n_1+i}$, $i=1,...,n_2$ and $n_1+n_2=n$.

Define 
\begin{align*}
&U_1={n_1 \choose 2}^{-1} \sum_{1\leq k <j \leq n_1}  \|\bi X_k+\bi X_j\|:={n_1 \choose 2}^{-1} \sum_{1\leq k <j \leq n_1}  h_1(\bi X_k, \bi X_j)\\
&U_2={n_2 \choose 2}^{-1} \sum_{1\leq k <j \leq n_2}  \|\bi Y_k-\bi Y_j\|:={n_2 \choose 2}^{-1} \sum_{1\leq k <j \leq n_2}  h_2(\bi Y_k,  \bi Y_j),
\end{align*}
where $h_1(\bi s, \bi t)= \|\bi s+\bi t\|$ and $h_2(\bi s, \bi t)= \|\bi s-\bi t\|$.  $U_1$ and $U_2$ are unbiased estimators for $\mathbb{E}\|\bi X+\bi X'\|$ and $\mathbb{E}\|\bi X-\bi X'\|$, respectively.
Thus, by (\ref{test1}), testing $H_0$ against any alternative hypothesis will be equivalent to testing $\mathbb{E} U_1=\mathbb{E} U_2$.

In order to develop the JEL approach, the corresponding jackknife pseudo-value for $U_1$ and $U_2$ are respectively given by 
\begin{align*}
&\hat{V}^{(1)}_{i}=n_1U_1-(n_1-1)U^{(-i)}_{1, n_1-1}, \ i=1,..., n_1,\\
&\hat{V}^{(2)}_{i}=n_2U_2-(n_2-1)U^{(-i)}_{2, n_2-1}\ i=1,..., n_2,
\end{align*} where 
\begin{align*}
&U^{(-i)}_{1, n_1-1}={n_1-1 \choose 2}^{-1}\sum_{1\leq j\leq k\leq n_1, j, k \neq i}h_1(\bi X_j, \bi X_k),\\
&U^{(-i)}_{2, n_2-1}={n_2-1 \choose 2}^{-1}\sum_{1\leq j\leq k \leq n_2, j, k \neq i}h_2(\bi Y_j, \bi Y_k).
\end{align*}
Those jackknife pseudo values are asymptotically independent (\cite{Jing2009}, \cite{Shi1984})
and 
 \begin{align*}
&U_{1}=\frac{1}{n_1}\sum_{i=1}^{n_1}\hat{V}^{(1)}_{i},\\
&U_{2}=\frac{1}{n_2}\sum_{i=1}^{n_2}\hat{V}^{(2)}_{i}.
\end{align*}
Under $H_0$, $\E(V^{(1)}_i)=\E(V^{(2)}_i):=\theta_0,$ where $\theta_0=\E \|\bi X_1-\bi X_2\|$.

To apply JEL to these jackknife pseudo values, 
let $\bi p=(p_1,...,p_{n_1})^T$ and $\bi q=(q_1,...,q_{n_2})^T$ be probability vectors. Then the maximum constricted jackknife empirical likelihood is 
\begin{align*} 
L=\max_{\bi p, \bi q, \theta} \left \{\left (\prod_{i=1}^{n_1}p_i\right )  \left ( \prod_{j=1}^{n_2}q_j \right )\right \},
\end{align*}
subject to the following constraints
\begin{align*}
&p_i \ge 0,  \  i=1,...,n_1; \ \sum_{i=1}^{n_1} p_i=1; \ \sum_{i=1}^{n_1}p_i (\hat{V}^{(1)}_i-\theta)=0,\\
&q_j \ge 0,  \  j=1,...,n_2; \ \sum_{j=1}^{n_2} q_j=1; \ \sum_{j=1}^{n_2}q_j (\hat{V}^{(2)}_j-\theta)=0.
\end{align*}

Utilizing the Lagrange multiplier approach, we have 
\begin{align*}
&p_i=\frac{1}{n_1} \dfrac{1}{1+ \lambda_1 (\hat{V}^{(1)}_i- \theta)}, \ \ \  i=1, ..., n_1,\\
&q_j=\frac{1}{n_2} \dfrac{1}{1+ \lambda_2 (\hat{V}^{(2)}_j- \theta)}, \ \ \  j=1, ..., n_2,
\end{align*}
where $(\lambda_1, \lambda_2, \theta)^T$ satisfies
\begin{align}\label{findtheta}
&\sum_{i=1}^{n_1} \frac{\hat{V}^{(1)}_i-\theta}{1+\lambda_1 \left[\hat{V}^{(1)}_i-\theta\right ]}=0, \nonumber \\
&\sum_{j=1}^{n_2} \frac{\hat{V}^{(2)}_j-\theta}{1+\lambda_2 \left[\hat{V}^{(2)}_j-\theta\right ]}=0, \nonumber \\
&\lambda_1 \sum_{i=1}^{n_1} \frac{-1}{1+\lambda_1 \left[\hat{V}^{(1)}_i-\theta\right ]}+\lambda_{2}\sum_{j=1}^{n_2} \frac{-1}{1+\lambda_{2} \left[\hat{V}^{(2)}_j-\theta\right ]}=0.
%& n_1\lambda_1+n_2\lambda_2 = 0.
\end{align}
Therefore,  we can rewrite the empirical likelihood function as 
\begin{align*}
L=\prod_{i=1}^{n_1} \Big (\frac{1}{n_1} \dfrac{1}{1+ \lambda_1 (\hat{V}^{(1)}_i- \theta)} \Big ) \prod_{j=1}^{n_2} \Big(\frac{1}{n_2} \dfrac{1}{1+ \lambda_2 (\hat{V}^{(2)}_j- \theta)}\Big).
\end{align*}
Further, $\left(\prod_{i=1}^{n_1}p_i\right)  \left ( \prod_{j=1}^{n_2}q_j \right )$ is maximized at $p_i=\dfrac{1}{n_1}$ and $q_j=\dfrac{1}{n_2}$, $i=1,...,n_1, j=1,..., n_2.$
Thus, the jackknife empirical log-likelihood ratio is

\begin{align*}
%&l=-2\big (\sum^{n_1}_{i=1}\log (n_1 p_i)+\sum^{n_2}_{j=1}\log (n_2 q_j) \big )\\
l=2 \sum_{i=1}^{n_1}\log  \Big \{1+\lambda_1  \left[\hat{V}^{(1)}_i-\theta\right ] \Big  \}+
2 \sum_{i=1}^{n_2}\log \Big\{1+\lambda_{2}  \left[\hat{V}^{(2)}_i-\theta\right ] \Big \}, 
\end{align*}
where $(\lambda_1, \lambda_2, \theta)^T$ satisfies (\ref{findtheta}).

Define 
$g_1(\bi x)=\mathbb{E} h_1(\bi x, \bi X_2)-\theta_0$, $\sigma^2_{g1}=\mbox{var}\Big (g_1(\bi X_1)\Big )$, $g_2(\bi y)=\mathbb{E} h_2(\bi y, \bi Y_2)-\theta_0$, $\sigma^2_{g2}=\mbox{var}\Big (g_2(\bi Y_1)\Big )$.
We assume
\begin{itemize}
\item \textbf{C1.}  $0< \sigma_{g_k}<\infty, \ k=1, 2$;
\item \textbf{C2.} $n=n_1+n_2$, $\dfrac{n_k}{n} \to \alpha_k > 0,\ k=1,2$ and $\alpha_1+\alpha_2=1.$
\end{itemize}
Condition \textbf{C1} implies that (1) the first moment of $F$ exists and  (2) $F$ is not a point mass distribution.   Condition \textbf{C2} means that the size of one part would not dominate over the other in the partition. When $n \to \infty$, $\min (n_1,  n_2) \to \infty$.  
With those two conditions, Wilks' theorem holds. 
\begin{theorem}\label{wilk}
Under $H_0$ and the conditions \textbf{C1-C2}, we have 
$$ l \stackrel{d}{\rightarrow} \chi^{2}_{1}, \;\;\;\text{as $n \to \infty$}.$$
\end{theorem}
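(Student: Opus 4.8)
The plan is to follow the classical empirical-likelihood route to Wilks' theorem, adapted on the one hand to the two-sample profile structure in which the single common parameter $\theta$ is estimated from both subsamples, and on the other hand to the fact that the constraints are built from jackknife pseudo-values rather than genuine i.i.d.\ observations. The first step is to reduce the pseudo-values to i.i.d.\ leading terms. Because $U_1$ and $U_2$ are \emph{non-degenerate} $U$-statistics (their first projections $g_1,g_2$ do not vanish, which is exactly why the data are split), the jackknife theory of \cite{Jing2009} and \cite{Shi1984} provides the linearizations $\hat V^{(1)}_i=\theta_0+2g_1(\bi X_i)+o_p(1)$ and $\hat V^{(2)}_j=\theta_0+2g_2(\bi Y_j)+o_p(1)$ under $H_0$. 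Writing $S_1:=U_1-\theta_0$, $S_2:=U_2-\theta_0$ and $\sigma_k^2:=4\sigma_{g_k}^2$, the $U$-statistic central limit theorem gives $\sqrt{n_1}\,S_1\stackrel{d}{\to}N(0,\sigma_1^2)$ and $\sqrt{n_2}\,S_2\stackrel{d}{\to}N(0,\sigma_2^2)$, and these limits are independent because $\mathcal X$ and $\mathcal Y$ are disjoint. In the same step I would record the two facts that drive the rest of the argument: $n_k^{-1}\sum_i(\hat V^{(k)}_i-\theta_0)^2\stackrel{P}{\to}\sigma_k^2$ and $\max_i|\hat V^{(k)}_i-\theta_0|=o_p(n^{1/2})$, both of which follow from C1 together with the linearization.

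The second step treats the system (\ref{findtheta}) for the three unknowns $\lambda_1,\lambda_2$ and $\delta:=\theta-\theta_0$. Owen's inversion argument, applied to each of the first two constraints and using the nondegeneracy of C1 with the moment bounds above, yields $\lambda_1=O_p(n^{-1/2})$, $\lambda_2=O_p(n^{-1/2})$ and $\delta=O_p(n^{-1/2})$. A first-order expansion of the first two equations in (\ref{findtheta}) then gives
\begin{align*}
\lambda_k\,\sigma_k^2=S_k-\delta+o_p(n^{-1/2}),\qquad k=1,2,
\end{align*}
while expanding each summand of the third (profile) equation to leading order reduces it to $n_1\lambda_1+n_2\lambda_2=O_p(n^{-1/2})$. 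Eliminating $\lambda_1,\lambda_2$ and dividing by $n$ before solving isolates $\delta$ as the variance-weighted pooled estimator of the common mean,
\begin{align*}
\delta=\frac{(n_1/\sigma_1^2)\,S_1+(n_2/\sigma_2^2)\,S_2}{(n_1/\sigma_1^2)+(n_2/\sigma_2^2)}+o_p(n^{-1/2}).
\end{align*}

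The third step is a second-order Taylor expansion of the log-likelihood ratio. Using $\log(1+x)=x-\tfrac12 x^2+o(x^2)$ together with $\lambda_k\sigma_k^2\approx S_k-\delta$, the contribution of subsample $k$ collapses to $n_k(S_k-\delta)^2/\sigma_k^2$, so that
\begin{align*}
l=\frac{n_1}{\sigma_1^2}(S_1-\delta)^2+\frac{n_2}{\sigma_2^2}(S_2-\delta)^2+o_p(1).
\end{align*}
Substituting the pooled $\delta$ and simplifying (the $o_p(n^{-1/2})$ error in $\delta$ contributes only $o_p(1)$ after multiplication by $n_k$), a direct computation gives
\begin{align*}
l=\frac{(S_1-S_2)^2}{\sigma_1^2/n_1+\sigma_2^2/n_2}+o_p(1)=\frac{(U_1-U_2)^2}{\Var(U_1-U_2)}+o_p(1).
\end{align*}
Since $S_1-S_2=U_1-U_2$ is, by Step 1, asymptotically normal with mean $0$ and variance $\sigma_1^2/n_1+\sigma_2^2/n_2$, the right-hand side is the square of an asymptotically standard normal variable, and $l\stackrel{d}{\to}\chi_1^2$ follows.

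I expect the crux to be the order control in Step 2: in contrast to the one-sample case the two multiplier equations are coupled through the shared $\delta$ and through the extra profile equation, so the bound on $(\lambda_1,\lambda_2)$ must be obtained for the joint system, and one must check that replacing $\theta$ by $\theta_0$ in the empirical second moments and in the $\max$-bounds produces only uniformly negligible remainders. Particular care is needed because the profile equation multiplies $\lambda_k$ by $n_k$, which amplifies errors; the resolution is to eliminate the multipliers and divide by $n$ before solving for $\delta$, as indicated above, after which $\delta$ is pinned down to the $o_p(n^{-1/2})$ accuracy that the final expansion requires. The jackknife linearization remainders of Step 1 also need verification, but these are supplied at the $o_p$ level by the cited $U$-statistic results and enter the argument only as negligible terms.
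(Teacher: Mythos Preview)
Your proposal is correct and follows the same overall strategy as the paper: linearize the jackknife pseudo-values via the Hoeffding/Jing machinery, control the Lagrange multipliers and the profiled $\theta$ to order $O_p(n^{-1/2})$, Taylor-expand the log-likelihood ratio to a quadratic form in the centred $U$-statistics, and conclude via the CLT using the independence of $U_1$ and $U_2$.

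The difference is purely in the endgame. The paper casts the linearized system (\ref{findtheta}) as a $3\times 3$ matrix equation, inverts it, writes $l$ as the quadratic form $(\sqrt n\,W_{1n},\sqrt n\,W_{2n})\,\bi A^{T}\bi D\,\bi A\,(\sqrt n\,W_{1n},\sqrt n\,W_{2n})^{T}$, invokes $\sigma_1^2=\sigma_2^2$ under $H_0$, and finishes by computing the eigenvalues of $\bi\Sigma_0^{1/2}\bi A^{T}\bi D_0\bi A\,\bi\Sigma_0^{1/2}$ to be $0$ and $1$. You instead solve the three equations by hand, recognize $\delta$ as the precision-weighted pooled mean of $S_1,S_2$, and collapse the quadratic directly to $(U_1-U_2)^2/(\sigma_1^2/n_1+\sigma_2^2/n_2)$, which is manifestly the square of an asymptotically standard normal. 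Your route is more elementary and more transparent about why exactly one degree of freedom survives; it also shows that the equality $\sigma_1^2=\sigma_2^2$ is not actually needed for the conclusion, a point that the paper's eigenvalue computation hides. Where the paper is more explicit is in the existence and order of the root $(\tilde\lambda_1,\tilde\lambda_2,\tilde\theta)$, which it imports as a lemma from \cite{Liu2018}; you subsume this under ``Owen's inversion argument'' for the coupled system, and your closing paragraph correctly identifies this joint order control as the step that needs the most care.
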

\begin{proof}
See the Appendix. 
\end{proof}

With Theorem \ref{wilk}, we reject $H_0$ if the observed jackknife empirical log-likelihood ratio $\hat{l} $ is greater than $\chi^2_{1, 1-\alpha}$, where $\chi^2_{1, 1-\alpha}$ is the $100(1-\alpha)\%$ quantile of $\chi^2$ distribution with $1$ degree of freedom. The $p$-value of the test can be calculated by
 $$ p\mbox{-value} = P_{H_0}(\chi^2_{1} >  \hat{l} ), $$
 and the power of the test is 
$$ \mbox{power} = P_{H_a} (l > \chi^2_{1, 1-\alpha}). $$
In the next theorem, we establish the consistence of the proposed test, which states that its power is tending to 1 as the sample size goes to infinity. 
\begin{theorem}\label{consis:test}
Under the conditions \textbf{C1-C2}, the proposed JEL test  is consistent for any fixed alternative. That is,
$$ P_{H_a} (l > \chi^2_{1, 1-\alpha}) \rightarrow 1, \;\;\;\text{as $n \to \infty$}. $$
\end{theorem}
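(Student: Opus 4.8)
The plan is to show that under any fixed alternative the log-likelihood ratio $l$ diverges to $+\infty$ in probability; since the critical value $\chi^2_{1,1-\alpha}$ is a fixed finite number, this at once gives $P_{H_a}(l>\chi^2_{1,1-\alpha})\to1$.

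Write $\mu_1=\E\|\bi X+\bi X'\|$ and $\mu_2=\E\|\bi X-\bi X'\|$. First I would record the law of large numbers: since the kernels $h_1,h_2$ are integrable under \textbf{C1}, the strong law for $U$-statistics gives $U_1\to\mu_1$ and $U_2\to\mu_2$ almost surely. A fixed alternative is precisely the statement that diagonal symmetry fails, i.e. $\delta:=\mu_1-\mu_2\neq0$; assume $\delta>0$ without loss of generality. I would also use that the jackknife sample variances $S_k^2:=n_k^{-1}\sum_i(\hat V_i^{(k)}-U_k)^2$ converge to finite positive limits (proportional to $\sigma_{gk}^2$), which is the standard jackknife-variance consistency under \textbf{C1}.

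Next I would profile out the weights. For fixed $\theta$ the inner maximization over $(\bi p,\bi q)$ decouples into two ordinary one-sample empirical-likelihood problems, so that $l=\min_\theta\{r_1(\theta)+r_2(\theta)\}$, where $r_k(\theta)=2\sum_i\log(1+\lambda_k(\theta)[\hat V_i^{(k)}-\theta])$ is the one-sample JEL ratio of sample $k$ against the candidate mean $\theta$. Each $r_k$ is nonnegative, convex on the convex hull of the pseudo-values, and vanishes at $\theta=U_k$, so the minimizer $\theta^\ast$ lies between $U_1$ and $U_2$, an interval that converges to a bounded one. The inequality I would exploit is the dual form $r_k(\theta)=2\max_\lambda\sum_i\log(1+\lambda[\hat V_i^{(k)}-\theta])$: because this is a maximum over $\lambda$, substituting any admissible value of $\lambda$ yields a valid lower bound for $r_k(\theta)$.

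To finish, observe that $U_2-U_1\to\delta>0$ forces, for all large $n$, at least one of $|\theta^\ast-U_1|\ge\delta/3$ or $|\theta^\ast-U_2|\ge\delta/3$; pick the sample $k$ for which this holds. Taking $\lambda$ of order $(U_k-\theta^\ast)/S_k^2$ in the dual form and estimating the logarithm gives a bound of the shape $r_k(\theta^\ast)\ge c\,n_k(U_k-\theta^\ast)^2/\{S_k^2+(U_k-\theta^\ast)^2\}$; with $|U_k-\theta^\ast|$ bounded away from $0$ and $S_k^2$ bounded, the right-hand side is at least $c'n_k\to\infty$, whence $l\ge r_k(\theta^\ast)\to\infty$ in probability. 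The main obstacle is making this one-sample lower bound rigorous: the Taylor estimate of the logarithm used to evaluate the chosen $\lambda$ is only justified when $\lambda[\hat V_i^{(k)}-\theta]$ is uniformly small, and since the pseudo-values need not be bounded under \textbf{C1} one must control $\max_i|\hat V_i^{(k)}-U_k|$ (showing it is $o_p(n_k^{1/2})$) and verify that $\theta^\ast$ falls inside the convex hull of each sample of pseudo-values with probability tending to one.
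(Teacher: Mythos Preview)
Your argument is correct and rests on the same key observation the paper uses: under a fixed alternative the profile minimizer $\tilde\theta$ (your $\theta^\ast$) cannot be close to both $\mu_1$ and $\mu_2$, so at least one of the two one-sample pieces diverges at rate $n_k$. The paper executes this by invoking the quadratic representation obtained in the proof of Theorem~\ref{wilk},
\[
l=\sum_{k=1}^2 \frac{n_k(U_k-\tilde\theta)^2}{\tilde S_k}+o_p(1),
\]
and then writing each summand as $\bigl[\sqrt{n_k}(U_k-\theta_k)/\tilde S_k+\sqrt{n_k}(\theta_k-\tilde\theta)/\tilde S_k\bigr]^2$, noting that the second term diverges for at least one $k$. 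Your route via the dual form $r_k(\theta)=2\max_\lambda\sum_i\log(1+\lambda[\hat V_i^{(k)}-\theta])$ yields the same divergence without first appealing to the Taylor expansion derived under $H_0$; the price is the extra work you flag at the end (controlling $\max_i|\hat V_i^{(k)}-U_k|=o_p(n_k^{1/2})$ so the chosen $\lambda$ is admissible). In effect the paper accepts the quadratic approximation as valid under $H_a$ and gets a shorter proof, while your dual lower bound is more self-contained but needs the auxiliary moment control. One small slip: you write $U_2-U_1\to\delta$ but with your sign convention it is $U_1-U_2\to\delta$.
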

\begin{proof}
See the Appendix. 
\end{proof}
\begin{remark}
One can test whether the distribution is symmetric about some specified center, $\bi \mu$, by letting $\bi Z=\bi X-\bi \mu$. Then testing $\bi X-\bi \mu \stackrel{D}{=} \bi \mu-\bi X$ is equivalent to testing $\bi Z \stackrel{D}{=} -\bi Z$. 
%\begin{align*}
%\bi X-\bi \mu \stackrel{D}{=} \bi \mu-\bi X
%\end{align*}
%is equivalent to testing
%\begin{align*}
%\bi Z \stackrel{D}{=} -\bi Z. 
%\end{align*}
However, when the center point is unknown, the profiled JEL will be applied, which is definitely  a worthwhile subject for our future research.
\end{remark}

%In application, an under-coverage problem may appear when the sample size is relatively small. In order to improve coverage probabilities, we utilize the adjusted empirical likelihood method (\cite{Chen2008}) by adding one more  pseudo-value 
%$$\hat{V}_{n+1}(\gamma)=-\frac{a_n}{n}\sum^{n}_{i=1}\hat{V}_{i}(\gamma),$$ 
%where $a_n=o_{p}(n^{2/3}).$ Under the recommendation of \cite{Chen2008}, we take  $a_n=\max (1, \log (n)/2).$

\section{Simulation Studies}
In order to assess the proposed JEL method for the symmetry testing, we conduct simulation studies in this section. Empirical sizes and powers for each method at significance level $\alpha=0.05$ are based on 10,000 replications. The results at level $\alpha=0.10$ are similar and hence are skipped. 
We compare the following methods. 
\begin{description}
\item[JEL:]our proposed JEL method.  R package ``dfoptim" \cite{Varadhan18} is used for solving the equation system of (\ref{findtheta}). 
\item[ET:] the DISCO test of \cite{Rizzo2010}.  The test is implemented by the permutation procedure. Function ``eqdist.etest" with the default number of replicates in  R package ``energy" is used \cite{Rizzo17}. 
\item[M:]  the symmetry test about an unknown median for univariate data based on the Bonferroni measure of skewness (\cite{Mira1999}). 
\item[CM:]  the symmetry test about an unknown median for univariate data using $C=\dfrac{\bar X-M}{s}$ as the test statistic, where $\bar X$, $M$ and $s$ are the sample mean, median and standard deviation, respectively (\cite{Cabilio1996}).
\item[MGG:] a robustified version of CM (\cite{Miao2006}).  The procedure ``lawstat" in R package is used.  
\item[CD:] the symmetric test based on the distance of characteristic functions (\cite{Chen2019}). Similar to the ET method, a permutation procedure is applied and the number of permutations is set to the same as the ET method.  
\end{description}  
Note that the methods of M, CM and MGG are only for univariate distributions.  Generating a  sample from the population and then splitting the sample into two samples will be the same as generating two independent samples from the same distribution. Therefore, in this section, we generate two independent samples with either equal  ($n_1=n_2=20$ or $n_1=n_2=50$) or unequal sample sizes ($n_1=40, n_2=60$)  from a $d$-dimensional distribution $F$. 

In  order to assess the empirical sizes of the proposed JEL, we consider the following diagonal symmetric distributions. Results are reported in Table \ref{tab:es1} for the univariate distributions  and \ref{tab:es} for the multivariate distributions.  
 \begin{itemize}
 \item Univariate case:  $N(0,1)$, $t_5$, Laplace, Logistic distributions with location of 0 and scale of 1, and the mixture of two normal distributions with different standard deviations, 1 and 2, respectively. 
 \item Multivariate cases $d=2,4,6$:
\begin{itemize}
 \item $F=N(\bi 0, \bi I_{d\times d})$ where $ \bi I_{d \times d}$ is the identity matrix. 
 \item  $F=N(\bi 0, \bi \Sigma_{d \times d})$, where $\bi \Sigma_{d \times d}$ is the matrix with the diagonal elements being 1 and the off-diagonal elements being 0.5.
\item $F = t_5(\bi 0, \bi\Sigma_{d \times d})$. %That is, $F=\sqrt{\bi \Sigma} t(5)$.
\item $F$ is a mixture of $N(\bi 0, \bi I_{d\times d})$ and $N(\bi 0, \bi \Sigma_{d \times d})$ with the mixing proportions of 0.9 and 0.1, respectively. 
\end{itemize}
 \end{itemize}

\begin{table}[ht]
\begin{center}
\caption{Empirical sizes of each test on univariate distributions under the significance level 0.05.}\label{tab:es1}
\begin{tabular}{c|cccccccc}\hline\hline
     Distribution&Sample Size&  JEL &ET&MGG&CM&M &CD \\ \hline
     & $n_1=n_2=20$       &.058&.044&.041&.037&.031&.055 \\
   N(0,1)   & $n_1=40,n_2=60$       &.048&.044&.049&.048&.044 &.057\\
      & $n_1=n_2=50$      &.048&.044&.049&.048&.044 &.057\\
\midrule

      & $n_1=n_2=20$     &.075&.048&.057&.035&.032&.056 \\
  $ t_5(0,1)$   & $n_1=40,n_2=60$       &.060&.049&.060&.042&.043&.058 \\
      & $n_1=n_2=50$       &.057&.043&.061&.041&.043&.054 \\
\midrule
   & $n_1=n_2=20$       &.078&.048&.062&.032&.032&.056 \\
   Laplace(0,1)   & $n_1=40,n_2=60$      &.051&.048&.063&.033&.040&.057 \\
      & $n_1=n_2=50$       &.054&.048&.063&.033&.040& .057\\
\midrule
      & $n_1=n_2=20$       &.069&.047&.046&.034&.032&.057 \\
   Logistic(0,1)   & $n_1=40,n_2=60$       &.042&.048&.047&.037&.040&.058 \\
      & $n_1=n_2=50$        &.048&.048&.047&.037&.040&.058 \\
\midrule

      & $n_1=n_2=20$       &.055&.043&.047&.038&.033&.053 \\
    $90\%N(0, 1)+10\%N(0, 2)$& $n_1=40,n_2=60$      &.041&.043&.058&.047&.047 &.056\\
      & $n_1=n_2=50$      &.041&.043&.054&.047&.045&.055 \\ 
 \hline\hline
\end{tabular}
%\end{small}
\end{center}
\end{table}

The proposed JEL is only compared with ET and CD approaches in Table \ref{tab:es} since other methods of MGG, CM and M are not valid for multivariate cases. 
From Tables \ref{tab:es1} and \ref{tab:es} , we see that in the heavy-tailed distributions, although JEL has a slight oversize problem while CM and M are under-sized, those problems are not longer an issue when the sample size increases. Overall,  the empirical sizes of all methods are fairly close to the nominal levels.  Furthermore,  the dimensions and the choices of $n_1$ and $n_2$ do not have too much impact on the control of the sizes. The proposed JEL test  can control the sizes pretty well in different dimensions and in unequal cases. 
\begin{table}[]
\begin{center}
%\begin{small}
\caption{Empirical sizes of each test on multivariate distributions under the significance level 0.05.}\label{tab:es}
\begin{tabular}{c|ccccc}\hline \hline

               Distribution &Simension&Sample Size&   JEL &ET& CD 
\\ \hline

%& &&& JEL &ET& EC   \\ \hline

%%&   &    & $n_1=n_2=20$       &.104              &.093&.058&.044\\
%%&    &d=1&$n_1=40, n_2=60$          &.096             &.093&.048&.044\\
%%&    &&$n_1=n_2=50$          &.100              &.094&.048 &.044\\\cmidrule{3-8}
%
   &    & $n_1=n_2=20$       &.051&.046&.053\\
    &$d=2$&$n_1=40, n_2=60$          &.044&.043&.053\\
    &&$n_1=n_2=50$          &.045 &.043&.055\\\cmidrule{2-6}

   &    & $n_1=n_2=20$       &.048&.045&.056\\
$N(\bi 0, \bi I_{d\times d})$    &$d=4$&$n_1=40, n_2=60$          &.043&.043&.056\\
    &&$n_1=n_2=50$         &.046 &.046&.054\\\cmidrule{2-6}

   &    & $n_1=n_2=20$      &.041&.045&.055\\
    &$d=6$&$n_1=40, n_2=60$        &.044&.044&.053\\
    &&$n_1=n_2=50$         &.047 &.046&.055\\

 \midrule

   &    & $n_1=n_2=20$&.053&.046&.054\\
   &$d=2$&$n_1=40, n_2=60$          &.047&.046&.054\\
    &&$n_1=n_2=50$         &.045 &.044&.055\\\cmidrule{2-6}

   &    & $n_1=n_2=20$      &.052&.042&.056\\
$N(\bi 0, \bi \Sigma_{d\times d})$     &$d=4$&$n_1=40, n_2=60$        &.043&.046&.055\\
    &&$n_1=n_2=50$         &.045 &.044&.054\\\cmidrule{2-6}

   &    & $n_1=n_2=20$      &.051&.046&.054\\
    &$d=6$&$n_1=40, n_2=60$          &.046&.043&.050\\
    &&$n_1=n_2=50$         &.044 &.046&.055\\

 \midrule

    &    & $n_1=n_2=20$      &.050&.043&.052\\
   &$d=2$&$n_1=40, n_2=60$        &.043&.046&.055\\
    &&$n_1=n_2=50$         &.046 &.041&.055\\\cmidrule{2-6}

   &    & $n_1=n_2=20$      &.046&.043&.054\\
$90\%N(\bi 0, \bi I_{d\times d})+10\%N(\bi 0, \bi \Sigma_{d\times d})$     &$d=4$&$n_1=40, n_2=60$         &.045&.043&.054\\
    &&$n_1=n_2=50$         &.045 &.040&.051\\\cmidrule{2-6}

   &    & $n_1=n_2=20$      &.045&.045&.057\\
    &$d=6$&$n_1=40, n_2=60$          &.047&.044&.051\\
    &&$n_1=n_2=50$         &.047 &.045&.056\\

 \midrule 
   &    & $n_1=n_2=20$      &.066&.043&.051\\
   &$d=2$&$n_1=40, n_2=60$        &.055&.048&.058\\
    &&$n_1=n_2=50$          &.051 &.044&.054\\\cmidrule{2-6}

  &    & $n_1=n_2=20$       &.102&.045&.058\\
$t_5(\bi 0, \bi \Sigma_{d \times d})$    &$d=4$&$n_1=40, n_2=60$         &.051&.047&.051\\
    &&$n_1=n_2=50$         &.047 &.046&.053\\\cmidrule{2-6}

   &    & $n_1=n_2=20$    &.180&.046&.054\\
    &$d=6$&$n_1=40, n_2=60$         &.068&.046&.053\\
    &&$n_1=n_2=50$         &.056 &.046&.057  \\ \hline\hline

\end{tabular}
%\end{small}
\end{center}
\end{table}

In the second part of the simulation study, we consider the distributions not symmetric around $\bi 0$ to assess the empirical powers. 
The following distributions in $\mathbb{R} ^d$ $(d=2, 4, 6)$ are considered and results are reported in Table \ref{tab:ep}. 
 \begin{itemize}
 \item $F= N(\bi{0.5}, \bi I_{d\times d})$, where $\bi{0.5}$ is the $d$-vector with all elements being 0.5.  %where $ \bi I_{d \times d}$ is the $d$-dimensional identity matrix. 
 \item  $F=N(\bi{0.5}, 16 \bi \Sigma_{d \times d})$.% where $\bi \Sigma'$ is the covariance matrix with  the off-diagonal elements are 0, and the diagonal elements are 16 except the first is 1. For example, $d=2$,  $\bi \Sigma'=\begin{pmatrix} 
%1 & 0 \\
%0 & 16 
%\end{pmatrix}.$
\item $F$ is a mixture of $N(\bi 0, \bi I_{d\times d})$ and $N(\bi {0.5}, 16\bi \Sigma_{d \times d})$ with an equal mixing proportion.   
\item $F = t_5(\bi{0.5}, 16 \bi \Sigma_{d \times d})$.  
%That is, $F=\sqrt{\bi \Sigma} t(5)$.
 \end{itemize}

%most of the empirical powers are higher than 95\%. As sample sizes increase, the empirical powers increase accordingly. For the lognormal case, the powers are lower when $d=2$. However, for higher dimension, $d=6$, we have much higher powers.  These results  have shown that our JEL approach is powerful to test the symmetry around some specified points.
%

\begin{table}[]
\begin{center}
%\begin{small}
\caption{Empirical powers of each test under the significance level 0.05.}\label{tab:ep}
\begin{tabular}{c|ccccc}\hline \hline

               Distribution &Dimension&Sample Size&   
JEL &ET& CD\\ \hline

%& &&& JEL &ET& EC   \\ \hline

%&   &    & $n_1=n_2=20$       &.171          &.905&.096&.829\\
%&    &d=1&$n_1=40, n_2=60$          &.317           &.998&.103&.995\\
%&    &&$n_1=n_2=50$                  &.344&.998 &.226&.996\\\cmidrule{3-8}

   &    & $n_1=n_2=20$       &.158&.973&.944\\
    &$d=2$&$n_1=40, n_2=60$         &.390&1.00&1.00\\
    &&$n_1=n_2=50$         &.416 &1.00&1.00\\\cmidrule{2-6}

   &    & $n_1=n_2=20$      &.274&.999&.997\\
$N(\bi{0.5}, \bi I_{d\times d})$    &$d=4$&$n_1=40, n_2=60$          &.642&1.00&1.00\\
    &&$n_1=n_2=50$        & .686&1.00&1.00\\\cmidrule{2-6}

   &    & $n_1=n_2=20$       &.402&1.00&1.00\\
    &$d=6$&$n_1=40, n_2=60$          &.803&1.00&1.00\\
    &&$n_1=n_2=50$        &.846&1.00&1.00\\

 \midrule
% &   &    & $n_1=n_2=20$       &.391             &.093&.389&.044\\
%&    &d=1&$n_1=40, n_2=60$          &.131           &.093&.119&.044\\
%&    &&$n_1=n_2=50$          &  .123        &.094&.112 &.044\\\cmidrule{3-8}
%
%
%&   &    & $n_1=n_2=20$       &.416              &.093&.413&.044\\
%&    &d=2&$n_1=40, n_2=60$          &.136             &.101&.125&.045\\
%&    &&$n_1=n_2=50$          &.125              &.096&.113 &.042\\\cmidrule{3-8}
%
%&   &    & $n_1=n_2=20$      & .976           &.098&.974&.046\\
%&$N(\bi 0, \bi \Sigma'_{d\times d})$    &d=4&$n_1=40, n_2=60$          &.763             &.095&.762&.047\\
%&    &&$n_1=n_2=50$          &  .713            &.098& .712&.047\\\cmidrule{3-8}
%
%
%&   &    & $n_1=n_2=20$       &.997         &.095&.996&.047\\
%&    &d=6&$n_1=40, n_2=60$          & .945            &.096&.944&.046\\
%&    &&$n_1=n_2=50$          & .917           &.092&.916 &.043\\
%
%
%
% \midrule
%

%&   &    & $n_1=n_2=20$       &.406              &.185&.404&.104\\
%&    &d=1&$n_1=40, n_2=60$          &.151           &.304&.141&.198\\
%&    &&$n_1=n_2=50$          &.130            &.308&.121 &.203\\\cmidrule{3-8}

   &    & $n_1=n_2=20$       &.867&.119&.080\\
    &$d=2$&$n_1=40, n_2=60$        &.573&.248&.116\\
    &&$n_1=n_2=50$        &.535 &.262&.117\\\cmidrule{2-6}

   &    & $n_1=n_2=20$      &.993&.137&.069\\
$N(\bi{0.5},  16 \bi \Sigma_{d\times d})$    &$d=4$&$n_1=40, n_2=60$       &.933&.287&.094\\
    &&$n_1=n_2=50$          &.904 &.299&.093\\\cmidrule{2-6}

   &    & $n_1=n_2=20$      &.996&.144&.068\\
    &$d=6$&$n_1=40, n_2=60$          &.990&.310&.083\\
    &&$n_1=n_2=50$          &.979 &.322&.080\\

 \midrule

% &   &    & $n_1=n_2=20$       &.182             &.112&.174&.055\\
%&    &d=1&$n_1=40, n_2=60$          &.054         &.159&.036&.091\\
%&    &&$n_1=n_2=50$          &.049            &.163&.033 &.093\\\cmidrule{3-8}

   &    & $n_1=n_2=20$       &.521&.061&.059\\
    &$d=2$&$n_1=40, n_2=60$       &.220&.099&.062\\
    &&$n_1=n_2=50$         &.207&.109&.063\\\cmidrule{2-6}

   &    & $n_1=n_2=20$       &.934&.063&.057\\
$50\%N(\bi 0, \bi I_{d \times d})+50\%N(\bi{0.5}, 16\bi \Sigma_{d\times d})$    &$d=4$&$n_1=40, n_2=60$          &.822&.120&.053\\
    &&$n_1=n_2=50$         &.804 &.120&.060\\\cmidrule{2-6}

   &    & $n_1=n_2=20$      &.987&.066&.058\\
    &$d=6$&$n_1=40, n_2=60$          &.942&.128&.055\\
    &&$n_1=n_2=50$        &.937&.132&.054\\

 \midrule

% &   &    & $n_1=n_2=20$            &.168&.355&.149&.234\\
%&    &d=1&$n_1=40, n_2=60$          &.093        &.637&.056&.508\\
%&    &&$n_1=n_2=50$          &.091           &.656&.051&.526\\\cmidrule{3-8}

   &    & $n_1=n_2=20$       &.972&.102&.075\\
    &$d=2$&$n_1=40, n_2=60$         &.936&.204&.116\\
    &&$n_1=n_2=50$         &.932&.204&.114\\\cmidrule{2-6}

   &    & $n_1=n_2=20$       &.990&.118&.079\\
$t_5(\bi{0.5}, 16 \bi \Sigma_{d\times d})$    &$d=4$&$n_1=40, n_2=60$          &.994&.233&.101\\
   &&$n_1=n_2=50$          &.992&.239&.102\\\cmidrule{2-6}

   &    & $n_1=n_2=20$      &.988&.127&.070\\
    &$d=6$&$n_1=40, n_2=60$         &.995&.260&.097\\
    &&$n_1=n_2=50$          &.997 &.262&.096\\

% \midrule
%
%&   &    & $n_1=n_2=20$       &.110              &.087&.082&.040\\
%&    &$\rho=0, \delta=1$&$n_1=40, n_2=60$        &.083  &  .099         &.045&.045\\
%&    &&$n_1=n_2=50$          &.086              &.099&.044&.048\\\cmidrule{3-8}
%
%&   &    & $n_1=n_2=20$       &.419            &.091&.415&.040\\
%&$\mbox{Kotz}\Big(\bi \Sigma_{2\times 2}(\rho, \delta)\Big)$    &$\rho=0, \delta=2$&$n_1=40, n_2=60$          & .162            &.095&.153&.047\\
%&    &&$n_1=n_2=50$          &.155              &.096&.143 &.048\\\cmidrule{3-8}
%
%
%&   &    & $n_1=n_2=20$       & .972        &.093&.967&.040\\
%&    &$\rho=0, \delta=4$&$n_1=40, n_2=60$          &  .941           &.095&.940&.046\\
%&    &&$n_1=n_2=50$          &  .944          &.101&.943 &.049\\
%

 \hline\hline

\end{tabular}
%\end{small}
\end{center}
\end{table}

In Table \ref{tab:ep}, we observe that JEL test has a low power in $N(\bi 0.5, \bi I_{d\times d})$, although it is consistent for any alternative. In that case, the ET and CD  method achieve higher power than the JEL test. One intuitive explanation is that the JEL approach assigns  more weights on the sample points close to $\bi 0$ and hence the JEL loses some power to reject the symmetry. The same reason leads to low powers for the JEL test in detecting location differences of two distributions \cite{Sang2019b}. Such a phenomenon is also common for the tests based on the density function approach, as mentioned in \cite{Martinez09}.  However, when the covariance is departure from the identity, especially for high dimensions, there are less sample points around $\bi 0$.  JEL is much more powerful than the ET  and CD approaches.  One can clearly find that the ET and CD approaches are dull to detect the asymmetry when the scatter matrix of the distribution is not identity, and the JEL method is efficient to detect this type of asymmetry, especially for high dimensions. It can also be seen clearly from the results that the empirical powers increase as the dimension increases from $d=2$ to $d=6$,  which is consistent with the results in \cite{Chen2019} and \cite{Szekely13}.

\section{Conclusion}

In this paper, we have developed a consistent JEL test for diagonal symmetry  based on energy statistics. The standard limiting chi-square distribution with one degree of  freedom is established and is used to conduct hypothesis testings without a permutation procedure. Numerical studies confirm the advantages of the proposed method under a variety of situations.  One of important contributions of this paper is to develop a powerful nonparametric method for diagonal symmetry test. This can also be extended to test the central symmetry around some specified centers.  
%Even though we have developed a consistent test for the diagonal symmetry, which can also be extended to testing central symmetry around a specified center. 
However, if the center is not provided, we need to estimate the center, which is definitely a worthwhile subject for future research.

%%%%%%%%%%%%%%%%%%%%%%%%%%%%%%%%%%%%%%%%%%%%%%%%%%%%%%%%%%%%%%%%
%%%%%%%%%%%%%%%%%%%%%%%%%%%%%%%%%%%%%%%%%%%%%%%%%%%%%%%%%%%%%%%%%
%%%%%%%%%%%%%%%%%%%%%%%%Appendix%%%%%%%%%%%%%%%%%%%%%%%%%%%
\section{Appendix}
%We will prove Theorem \ref{wilk-ksample}. 
Define  $\bi \lambda=(\lambda_1, \lambda_{2})$,
\begin{align*}
&W_{1n}(\theta, \bi \lambda)=\frac{1}{n}\sum_{i=1}^{n_1} \frac{\hat{V}^{(1)}_i-\theta}{1+\lambda_{1} \left[\hat{V}^{(1)}_i-\theta\right ]},\\
&W_{2n}(\theta, \bi \lambda)=\frac{1}{n}\sum_{i=1}^{n_2} \frac{\hat{V}^{(2)}_i-\theta}{1+\lambda_{2} \left[\hat{V}^{(2)}_i-\theta\right ]},\\
&W_{{3n}}(\theta, \bi \lambda)=\frac{1}{n}\lambda_{1}\sum_{i=1}^{n_1} \frac{-1}{1+\lambda_{1} \left[\hat{V}^{(1)}_i-\theta\right ]}+\frac{1}{n}\lambda_{2}\sum_{i=1}^{n_2} \frac{-1}{1+\lambda_{2} \left[\hat{V}^{(2)}_i-\theta\right ]}.
\end{align*}
\begin{lemma}[Hoeffding, 1948]\label{Hoeffding} 
If $\E[h^2_1(X_1, X_2)] < \infty, \ \E[h^2_2(Y_1, Y_2)] < \infty, \ \sigma^2_{g_1}>0, \ \sigma^2_{g_2}>0$
then
 $$\dfrac{\sqrt{n_1}(U_{1}-\theta_0)}{2\sigma_{g_1}} \stackrel{d}{\rightarrow} N(0, 1) \ \text{ as}  \ n_1 \to \infty,$$
 $$\dfrac{\sqrt{n_2}(U_{2}-\theta_0)}{2\sigma_{g_2}} \stackrel{d}{\rightarrow} N(0, 1) \ \text{ as}  \ n_2 \to \infty.$$
\end{lemma}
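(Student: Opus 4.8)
The statement is Hoeffding's classical central limit theorem for one-sample $U$-statistics, and the plan is to prove it by the Hajek projection (equivalently, the first-order $H$-decomposition). I would treat $U_1$ in full and observe that $U_2$ follows by the identical argument with $(h_2, g_2, \sigma_{g_2})$ in place of $(h_1, g_1, \sigma_{g_1})$. Since $h_1$ is symmetric and, under the centering in force, $\E h_1(\bi X_1, \bi X_2)=\theta_0$ so that $\E g_1(\bi X_1)=0$, the first step is to introduce the projection
\[
\hat U_1 = \theta_0 + \frac{2}{n_1}\sum_{i=1}^{n_1} g_1(\bi X_i),
\]
which is the $L^2$-optimal approximation of $U_1$ by a sum of i.i.d. mean-zero summands.

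Second, I would apply the ordinary Lindeberg--Levy CLT to the linear projection. Because the $g_1(\bi X_i)$ are i.i.d. with mean zero and variance $\sigma_{g_1}^2\in(0,\infty)$ (finite and positive by \textbf{C1} together with the hypothesis $\E[h_1^2]<\infty$),
\[
\frac{\sqrt{n_1}}{2\sigma_{g_1}}(\hat U_1-\theta_0)=\frac{1}{\sigma_{g_1}\sqrt{n_1}}\sum_{i=1}^{n_1} g_1(\bi X_i)\stackrel{d}{\rightarrow} N(0,1).
\]

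Third --- and this is the heart of the matter --- I would show the projection error is negligible at the $\sqrt{n_1}$ scale, that is $\sqrt{n_1}(U_1-\hat U_1)\stackrel{P}{\rightarrow}0$. The standard device is the exact variance identity for a degree-two $U$-statistic,
\[
\Var(U_1)={n_1 \choose 2}^{-1}\Big[2(n_1-2)\sigma_{g_1}^2+\zeta_2\Big],\qquad \zeta_2=\Var\big(h_1(\bi X_1,\bi X_2)\big)<\infty,
\]
the finiteness of $\zeta_2$ being precisely the assumption $\E[h_1^2]<\infty$. Since the projection has variance $\Var(\hat U_1)=4\sigma_{g_1}^2/n_1$ and is orthogonal to the higher-order $H$-decomposition component, one obtains $\E[(U_1-\hat U_1)^2]=\Var(U_1)-\Var(\hat U_1)=O(n_1^{-2})$; multiplying by $n_1$ and applying Chebyshev's inequality yields the claimed $o_P(n_1^{-1/2})$ bound.

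Finally, Slutsky's theorem combines the two pieces through
\[
\frac{\sqrt{n_1}}{2\sigma_{g_1}}(U_1-\theta_0)=\frac{\sqrt{n_1}}{2\sigma_{g_1}}(\hat U_1-\theta_0)+\frac{\sqrt{n_1}}{2\sigma_{g_1}}(U_1-\hat U_1)\stackrel{d}{\rightarrow} N(0,1),
\]
and the same reasoning delivers the result for $U_2$. I expect the only genuine obstacle to be the variance bookkeeping in the third step: one must verify the orthogonality of the $H$-decomposition terms so that the cross terms vanish and $\E[(U_1-\hat U_1)^2]$ collapses to the difference of variances, after which the $O(n_1^{-2})$ rate --- and hence the asymptotic negligibility of the remainder --- is immediate.
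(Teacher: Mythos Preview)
Your proposal is correct and is precisely the classical Hoeffding projection argument. The paper does not supply its own proof of this lemma at all; it simply attributes the result to Hoeffding (1948) and moves on, so there is nothing to compare against beyond noting that your outline is the standard textbook route.
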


\begin{lemma}\label{JingA.3}
Let $S_{j}=\dfrac{1}{n_j} \sum_{i=1}^{n} \left [\hat{V}^{(j)}_i- \theta_0\right ]^2,\  j=1, 2.$
Under the conditions of Lemma \ref{Hoeffding}, 
\begin{align*}
S_{j}=4\sigma^2_{g_j}+o_P(1),  \ n_j \to \infty,\ j=1,2.
\end{align*}
\end{lemma}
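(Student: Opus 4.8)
The plan is to obtain an exact linear representation of each jackknife pseudo-value in terms of the first-order projection $g_j$, and then reduce $S_j$ to a sample average governed by the weak law of large numbers. I will carry out the argument for $j=1$; the case $j=2$ is identical with $h_2,g_2$ replacing $h_1,g_1$. Starting from the definitions of $U_1$ and $U^{(-i)}_{1,n_1-1}$ and grouping the $\binom{n_1}{2}$ kernel evaluations according to whether they involve the index $i$, elementary algebra gives
\begin{align*}
\hat{V}^{(1)}_i=\frac{2}{n_1-1}\sum_{j\neq i}h_1(\bi X_i,\bi X_j)-U^{(-i)}_{1,n_1-1}.
\end{align*}
Writing $\bar{h}_{1,i}=\frac{1}{n_1-1}\sum_{j\neq i}h_1(\bi X_i,\bi X_j)$ and recalling $g_1(\bi x)=\E h_1(\bi x,\bi X_2)-\theta_0$, so that $\E\!\big(h_1(\bi X_i,\bi X_2)\mid\bi X_i\big)=g_1(\bi X_i)+\theta_0$, this yields the decomposition
\begin{align*}
\hat{V}^{(1)}_i-\theta_0=2g_1(\bi X_i)+2r_i-\big(U^{(-i)}_{1,n_1-1}-\theta_0\big),
\end{align*}
where $r_i=\bar{h}_{1,i}-\E\!\big(h_1(\bi X_i,\bi X_2)\mid\bi X_i\big)$ is, conditionally on $\bi X_i$, an average of $n_1-1$ i.i.d.\ mean-zero terms. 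Because the lemma is applied under $H_0$, where $\E h_1(\bi X_1,\bi X_2)=\E h_2(\bi Y_1,\bi Y_2)=\theta_0$ (this is also what makes the $\theta_0$-centering in Lemma \ref{Hoeffding} consistent), the projection $g_1$ has mean zero, so $\E[g_1^2(\bi X_1)]=\sigma_{g_1}^2$, and $U^{(-i)}_{1,n_1-1}-\theta_0$ is a centered leave-one-out $U$-statistic.

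Squaring the decomposition and averaging over $i$ splits $S_1$ into the leading term $\frac{4}{n_1}\sum_{i=1}^{n_1}g_1^2(\bi X_i)$ together with cross terms and remainder-squared terms. Since $\E[h_1^2]<\infty$ forces $\sigma_{g_1}^2<\infty$, the summands $g_1^2(\bi X_i)$ are i.i.d.\ with finite mean $\sigma_{g_1}^2$, so the weak law of large numbers gives $\frac{4}{n_1}\sum_{i=1}^{n_1}g_1^2(\bi X_i)\to 4\sigma_{g_1}^2$ in probability. It then remains to show that every other term is $o_P(1)$.

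The main work, and the step I expect to be the main obstacle, is the uniform control of the remainder terms. For the squared-remainder pieces I would use second-moment bounds: $\E r_i^2\le \frac{1}{n_1-1}\E[h_1^2]=O(1/n_1)$ because $r_i$ is a conditionally mean-zero average of $n_1-1$ terms with finite second moment, and $\E\big(U^{(-i)}_{1,n_1-1}-\theta_0\big)^2=O(1/n_1)$ by the standard variance bound for a $U$-statistic (here the vanishing of the bias under $H_0$ is what makes this $O(1/n_1)$ rather than $O(1)$). Averaging these over $i$ gives $O(1/n_1)=o_P(1)$. For the cross terms, such as $\frac{1}{n_1}\sum_i g_1(\bi X_i)r_i$ and $\frac{1}{n_1}\sum_i g_1(\bi X_i)\big(U^{(-i)}_{1,n_1-1}-\theta_0\big)$, I would apply the Cauchy--Schwarz inequality to dominate each by the product of $\big(\frac{1}{n_1}\sum_i g_1^2(\bi X_i)\big)^{1/2}=O_P(1)$ and the square root of a remainder-squared average already shown to be $o_P(1)$, so each cross term is $o_P(1)$. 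The only delicate point is that the quantities $r_i$ and $U^{(-i)}_{1,n_1-1}$ share observations across different $i$, so these second-moment bounds must be obtained termwise (or through the Hoeffding decomposition of the leave-one-out $U$-statistic) rather than by invoking independence across $i$; this is precisely the bookkeeping of Lemma A.3 in \cite{Jing2009}, which the present lemma adapts. Collecting the convergent leading term with the negligible remainders yields $S_1=4\sigma_{g_1}^2+o_P(1)$, and the same argument gives $S_2=4\sigma_{g_2}^2+o_P(1)$.
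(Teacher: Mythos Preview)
Your proposal is correct and matches the paper's approach: the paper does not give an independent proof but simply cites Lemma~A.3 of \cite{Jing2009}, and your sketch is a faithful rendering of that standard argument (exact linear representation of the pseudo-values via the Hoeffding projection, WLLN on the leading $g_j^2$ term, second-moment and Cauchy--Schwarz bounds on the remainder and cross terms). Your remark that the $\theta_0$-centering of $U_1$ requires $H_0$ is also apt and consistent with how the paper uses the lemma.
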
 
\begin{proof}
See Lemma A.3 in \cite{Jing2009}.
\end{proof}

\begin{lemma}[Liu, Liu and Zhou, 2018]\label{Liu5.5}
Under the conditions of Lemma \ref{Hoeffding}, and under $H_0$, with probability tending to one as $n \to \infty$, there exists a root $\tilde{\theta}$ of 
\begin{align*}
&W_{jn}(\theta, \bi \lambda)=0, \ j=1, 2, 3,
\end{align*}
such that $|\tilde{\theta}-\theta_0|< \delta,$ where $\delta=n^{-1/3}$. 
\end{lemma}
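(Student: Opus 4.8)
The plan is to profile out the two Lagrange multipliers and reduce the three-equation system to a single scalar equation in $\theta$, to which an intermediate-value argument applies. Fix $\theta$ in the closed interval $[\theta_0-\delta,\theta_0+\delta]$ with $\delta=n^{-1/3}$, and treat each sample separately. The equation $W_{jn}(\theta,\bi\lambda)=0$ in the single unknown $\lambda_j$ is exactly the one-sample empirical-likelihood equation for the mean built from the pseudo-values $\hat V^{(j)}_1,\dots,\hat V^{(j)}_{n_j}$ centered at $\theta$. By the standard convexity argument (the map $\lambda_j\mapsto \frac1n\sum_i\log(1+\lambda_j[\hat V^{(j)}_i-\theta])$ is strictly concave on the admissible set), $W_{jn}(\theta,\cdot)=0$ has a unique root $\lambda_j=\lambda_j(\theta)$ on the event that $\theta$ lies in the interior of the convex hull of the pseudo-values, an event of probability tending to one for $\theta$ this close to $\theta_0$; moreover $\lambda_j(\theta)$ is continuous in $\theta$.

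Next I would pin down the order of $\lambda_j(\theta)$. Writing $v_i=\hat V^{(j)}_i-\theta$ and expanding the summand to first order in $\lambda_j$, the defining equation yields $\lambda_j(\theta)=\bar v_j/\widehat S_j+(\text{remainder})$, where $\bar v_j=U_j-\theta$ and $\widehat S_j=\frac1{n_j}\sum_{i=1}^{n_j}v_i^2$. By Lemma \ref{Hoeffding}, $U_j-\theta_0=O_P(n^{-1/2})$, so $\bar v_j=O_P(\delta)$ uniformly for $|\theta-\theta_0|\le\delta$; by Lemma \ref{JingA.3} together with $|\theta-\theta_0|\le\delta\to0$, $\widehat S_j=4\sigma_{g_j}^2+o_P(1)$, which \textbf{C1} keeps bounded away from $0$. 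Controlling the remainder requires the maximal bound $\max_{1\le i\le n_j}|\hat V^{(j)}_i-\theta_0|=o_P(n_j^{1/2})$ on the pseudo-values, which I would establish from the moment assumption $\E h_j^2<\infty$ as in \cite{Jing2009}. Together these give $\lambda_j(\theta)=O_P(\delta)$ uniformly over the interval and $1+\lambda_j(\theta)[\hat V^{(j)}_i-\theta]>0$ for all $i$.

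The reduction to a scalar equation is the crux. On the event $\{W_{1n}=W_{2n}=0\}$, the elementary identity $\frac{-\lambda_j}{1+\lambda_j v_i}=-\lambda_j+\lambda_j^2\frac{v_i}{1+\lambda_j v_i}$, summed over the $j$-th sample, makes the second piece equal to $\lambda_j^2\sum_i\frac{v_i}{1+\lambda_j v_i}=\lambda_j^2\,n\,W_{jn}=0$, so that
\begin{align*}
n\,W_{3n}\big(\theta,\lambda_1(\theta),\lambda_2(\theta)\big)=-\,n_1\lambda_1(\theta)-n_2\lambda_2(\theta).
\end{align*}
Hence the full system is solved exactly when $g(\theta):=n_1\lambda_1(\theta)+n_2\lambda_2(\theta)=0$. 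Substituting the first-order expansion $\lambda_j(\theta)=(U_j-\theta)/(4\sigma_{g_j}^2)+o_P(\delta)$ shows that $g$ is, up to a negligible perturbation, an affine and strictly decreasing function of $\theta$ whose zero is the variance-weighted average of $U_1$ and $U_2$; since both lie within $O_P(n^{-1/2})$ of $\theta_0$, this zero lies within $O_P(n^{-1/2})=o_P(\delta)$ of $\theta_0$. Consequently $g(\theta_0-\delta)>0>g(\theta_0+\delta)$ with probability tending to one, and by continuity of $g$ the intermediate value theorem yields a root $\tilde\theta\in(\theta_0-\delta,\theta_0+\delta)$, as claimed.

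The main obstacle is the uniform, rather than pointwise, control of the profiled multipliers $\lambda_j(\theta)$ across the whole interval, combined with the non-standard behavior of the jackknife pseudo-values: being only asymptotically independent rather than i.i.d., they demand a separate argument for the maximal bound $\max_i|\hat V^{(j)}_i-\theta_0|=o_P(n_j^{1/2})$ and for the consistency of $\widehat S_j$ uniformly in $\theta$. Once these ingredients are secured the remaining steps are routine Taylor expansions, and I would lean on the pseudo-value estimates of \cite{Jing2009} and \cite{Liu2018} to discharge them.
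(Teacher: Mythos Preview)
The paper does not actually prove this lemma: it is stated with attribution to \cite{Liu2018} in the lemma header, and no argument is supplied in the Appendix. Your reconstruction is correct and is in fact the standard route used in that reference and its antecedents (Qin--Lawless style profiling): solve $W_{1n}=W_{2n}=0$ for the multipliers at each fixed $\theta$, use the algebraic identity you display to collapse $W_{3n}$ to $-\,n_1\lambda_1(\theta)-n_2\lambda_2(\theta)$, and apply an intermediate-value argument to the resulting scalar function on $[\theta_0-\delta,\theta_0+\delta]$. The technical caveats you flag---uniform $O_P(\delta)$ control of $\lambda_j(\theta)$ over the interval and the maximal bound $\max_i|\hat V^{(j)}_i-\theta_0|=o_P(n_j^{1/2})$ for the jackknife pseudo-values---are precisely the ingredients that \cite{Jing2009} and \cite{Liu2018} supply, so there is no gap beyond what the paper itself defers to those sources.
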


Let $\tilde{\eta}=(\tilde{\theta}, \tilde{\bi \lambda})^T$ be the solution to the above equations, and $\eta_0=(\theta_0, 0, 0)^T$. By expanding $W_{jn}(\tilde{\eta})$ at $\eta_0$, we have, for $j=1, 2, 3$,
\begin{align}\label{W_n}
0=W_{jn}(\eta_0)+ \dfrac{\partial W_{jn}}{\partial \theta}(\eta_0)(\tilde{\theta}-\theta_0)+\dfrac{\partial W_{jn}}{\partial \lambda_1}(\eta_0) \tilde{\lambda}_1+\dfrac{\partial W_{jn}}{\partial \lambda_2}(\eta_0) \tilde{\lambda}_{2}+R_{jn},
\end{align}
where $R_{jn}=\dfrac{1}{2} (\tilde{\eta}-\eta_0)^T \dfrac{\partial^2 W_{jn}(\eta^{*})}{\partial \eta \partial \eta^T} (\tilde{\eta}-\eta_0)$ where $\eta^{*}$ lies between $\eta_0$ and $\tilde{\eta}$. One can easily prove that 
$R_{jn}=o_P(n^{-1/2}).$

\begin{lemma}\label{cov:u}
$\mbox{Cov}(U_{1}, U_{2})=0$.
\end{lemma}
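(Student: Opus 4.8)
The plan is to exploit the data-splitting construction directly: by design $U_1$ and $U_2$ are built from two \emph{disjoint} blocks of the i.i.d. sample, so they are in fact independent, and the covariance of independent random variables vanishes.

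First I would observe that $U_1$ is a measurable (indeed symmetric) function of the first block $\mathcal{X}=\{\bi X_1,\ldots,\bi X_{n_1}\}$ alone, since every kernel evaluation $h_1(\bi X_k,\bi X_j)=\|\bi X_k+\bi X_j\|$ in its definition involves only indices $1\le k<j\le n_1$. Likewise $U_2$ is a function of the second block $\mathcal{Y}=\{\bi Y_1,\ldots,\bi Y_{n_2}\}=\{\bi X_{n_1+1},\ldots,\bi X_n\}$ alone. The two index sets $\{1,\ldots,n_1\}$ and $\{n_1+1,\ldots,n\}$ are disjoint. Because $\{\bi X_i\}_{i=1}^n$ is an i.i.d. sample, any two disjoint subcollections are mutually independent; in particular $\mathcal{X}$ and $\mathcal{Y}$ are independent. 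Consequently $U_1=f(\mathcal{X})$ and $U_2=g(\mathcal{Y})$ are independent random variables, whence $\E[U_1U_2]=\E[U_1]\,\E[U_2]$ and $\mbox{Cov}(U_1,U_2)=0$. The second moments needed for the covariance to be well defined are finite under the conditions of Lemma \ref{Hoeffding} (equivalently \textbf{C1}), which guarantee $\E[h_k^2]<\infty$ for $k=1,2$.

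There is essentially no hard step here; the only point requiring care is to invoke the disjointness of the two blocks explicitly, so that the independence—and hence the factorization of the product expectation—is genuinely justified rather than merely assumed. One could alternatively verify the claim through the Hoeffding decomposition, writing each $U$-statistic in terms of its kernel projections and checking that every cross term pairs an $\mathcal{X}$-factor with an independent, mean-zero $\mathcal{Y}$-factor; but this is a longer route to the same conclusion, and the independence argument above is both cleaner and complete.
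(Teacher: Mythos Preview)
Your argument is correct: $U_1$ is a function of $(\bi X_1,\ldots,\bi X_{n_1})$ alone and $U_2$ of $(\bi X_{n_1+1},\ldots,\bi X_n)$ alone, these two blocks are independent because the sample is i.i.d., and hence $U_1$ and $U_2$ are independent with zero covariance. The paper in fact states this lemma without proof, presumably regarding it as immediate from the data-splitting construction; your write-up supplies exactly the justification the paper leaves implicit.
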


\textbf{Proof of Theorem \ref{wilk}}

By Lemma \ref{JingA.3} and equation (\ref{W_n}),
$$
\begin{pmatrix}
W_{1n}(\eta_0))\\
   W_{2n}(\eta_0)   \\
   0\\
\end{pmatrix}
= W
\begin{pmatrix}
    \tilde{\lambda}_{1}\\
\tilde{\lambda}_{2} \\
   \tilde{\theta}-\theta_0
    \end{pmatrix}+o_{P}(n^{-1/2}),
$$
where 
$$ W=
\begin{pmatrix}
 \alpha_1 \sigma_1^2&0&\alpha_1 \\
 0&\alpha_2 \sigma_2^2&\alpha_2\\
      \alpha_1&\alpha_2&0\\   
\end{pmatrix},$$
$\sigma^2_j=4\sigma^2_{g_j}, j=1, 2.$ 

 $W$ is nonsingular. Therefore,
$$\begin{pmatrix}
    \tilde{\lambda}_{1}\\
\tilde{\lambda}_{2} \\
   \tilde{\theta}-\theta_0
    \end{pmatrix}=W^{-1}\begin{pmatrix}
   W_{1n}(\eta_0)   \\
  W_{2n}(\eta_0)  \\
  0\\
\end{pmatrix}+o_{P}(n^{-1/2}).
$$

Under $H_0$, $\sigma^2_1=\sigma^2_2:=\sigma^2$.
\begin{align*}
&\tilde{\theta}-\theta_0=W_{1n}(\eta_0)+W_{2n}(\eta_0)+o_p(n^{-1/2}).
\end{align*}

By \cite{Jing2009}, we have 
\begin{align*}
&\tilde{\lambda}_{j}=\dfrac{U_{j}-\tilde{\theta}}{\tilde{S}_{j}}+o_{P}(n^{-1/2}), \ j=1, 2,\\
\end{align*}
where 
$\tilde{S}_{j}=\dfrac{1}{n_j}\sum^{n_j}_{i=1}(\hat{V}^{(j)}_{i}-\tilde{\theta})^2$. It can be easily to check that $\tilde{S}_{j}=\sigma^2_j+o_p(1),\ j=1,2$.
By the proof of Theorem 1 in \cite{Jing2009},
\begin{align*}
l(\theta_0)=\left [ \sum_{j=1}^2 n_j \dfrac{(U_{j}-\tilde{\theta})^2}{\sigma^2_{j}} \right](1+o_{P}(1)).
\end{align*}
By simple algebra,
\begin{align}\label{decomposition}
\sum_{j=1}^2 n_j \dfrac{(U_{j}-\tilde{\theta})^2}{\sigma^2_{j}} =(\sqrt{n}W_{1n}(\eta_0), \sqrt{n}W_{2n}(\eta_0),) \times \bi A^T \bi D \bi A
\times(\sqrt{n}W_{1n}(\eta_0), \sqrt{n}W_{2n}(\eta_0))^T +o_p(1),
\end{align}
where
$$ \bi D=
\begin{pmatrix}
\alpha_1/\sigma^2_1&0\\
 0&\alpha_2/\sigma^2_2\\
\end{pmatrix}$$
and 
$$ \bi A=
\begin{pmatrix}
1/\alpha_1-1&-1\\
 -1&1/\alpha_2-1\\
\end{pmatrix}.$$

By Hoeffding decomposition and Lemma \ref{cov:u},
$$
\sqrt{n}\begin{pmatrix}
   W_{1n}(\eta_0)   \\
  W_{2n}(\eta_0)  \\
\end{pmatrix}
\to 
N(\bi 0, \bi \Sigma),
$$
where 
$$ \bi \Sigma=
\begin{pmatrix}
\alpha_1 \sigma^2_{1}&0\\
0&\alpha_2 \sigma^2_{2}\\
\end{pmatrix}.$$

Hence, under $H_0$, the empirical log likelihood ratio converges in distribution to $\sum_{i=1}^2 \omega_i \chi^2_i$, where $\chi^2_i, i=1,2$ are two independent chi-square random variables with one degree of freedom, and $\omega_i, i=1, 2$ are eigenvalues of  $\bi \Sigma^{1/2}_0 \bi A^T \bi D_0  \bi A \bi \Sigma^{1/2}$, where 

$$ \bi \Sigma_0=\sigma^2
\begin{pmatrix}
\alpha_1&0\\
0&\alpha_2\\
\end{pmatrix}$$  and  
$$ \bi D_0=\dfrac{1}{\sigma^2}
\begin{pmatrix}
\alpha_1&0\\
 0&\alpha_2\\
\end{pmatrix}.$$
After a simple algebra calculation, $\bi \Sigma_0^{1/2} \bi A^T \bi D_0 \bi A  \bi \Sigma_0^{1/2}$ has two eigenvalues 0, 1.  $\hfill{\Box}$ 

%=\bi \Sigma_0^{1/2}  \bi A  \bi \Sigma_0^{1/2},$ which is idempotent with trace($\bi \Sigma_0^{1/2}  \bi A  \bi \Sigma_0^{1/2}$)=1.
%$\sigma^{1/2} A^T \bi d A  \sigma^{1/2}=\sigma.$ 
%The eigenvalues of $\Sigma_0^{1/2} A^T \bi d A  \Sigma_0^{1/2}=\Sigma_0^{1/2}  A  \Sigma_0^{1/2}$ are $0$ and $1$.

\medskip
\textbf{Proof of Theorem \ref{consis:test}}
Under $H_a$, $\mathbb{E}U_k,$ $k=1, 2$ will be different. Let $\mathbb{E}U_k=\theta_k, \ k=1, 2.$
From (\ref{decomposition}), 
\begin{eqnarray*}
 l&=&\sum^2_{k=1}\dfrac{n_k(U_k-\tilde{\theta})^2}{\tilde{S}^2_i}+o(1)\\
&=& \sum^2_{k=1}\left [\dfrac{\sqrt{n_k}(U_k-\theta_k)}{\tilde{S}_k}+\dfrac{\sqrt{n_k}(\theta_k-\tilde{\theta})}{\tilde{S}_k} \right ]^2+o(1),\\
\end{eqnarray*}
which will be divergent since at least one of $\dfrac{\sqrt{n_k}(\theta_k-\tilde{\theta})^2}{\tilde{S}_k}, k=1, 2$ will diverge to $\infty$. 
$\hfill{\Box}$

%%%%%%%%%%%%%%%%%%%%%%%%%%%%%%%%%%%%%%%%%%%%%%%%%%%%%%%%%%%%%%%
%%%%%%%%%%%%%%%%%%%%%%%%%%%%%%%%%%%%%%%%%%%%%%%%%%%%%%%%%%%%%%%
%%%%%%%%%%%%%%%%%%two-sample%%%%%%%%%%%%%%%%%%%%%%%%%%%%

\end{document}